\DeclareMathAlphabet{\mathpzc}{OT1}{pzc}{m}{it}
\begin{document}
\title{The Kirchhoff-Braess Paradox and Its Implications for Smart Microgrids}
\author{John Baillieul, Bowen Zhang, Shuai Wang}

\newtheorem{thm}{Theorem}
\newtheorem{theorem}{Theorem}
\newtheorem{proposition}{Proposition}
\newtheorem{lemma}{Lemma}
\newtheorem{corollary}{Corollary}

\newtheorem{definition}{Definition}
\newtheorem{dfn}{Definition}
\newtheorem{remark}{Remark}
\newtheorem{prob}{Problem}
\newtheorem{example}{Example}
\newtheorem{cond}{Condition}
\graphicspath{{../../../Images/}}

\thispagestyle{empty}

\IEEEaftertitletext{\vspace{-2\baselineskip}} 
\maketitle
\thispagestyle{empty}

 \let\thefootnote\relax\footnotetext{John Baillieul is with the Departments of Mechanical Engineering; Electrical and Computer Engineering, and the Division of Systems Engineering.  Bowen Zhang and Shuai Wang are  with the Division of Systems Engineering at Boston University, Boston, MA 02115. Corresponding author is John Baillieul (Email: johnb@bu.edu). \newline This work was supported by the U.S. National Science Foundation under EFRI Grant Number 1038230.}

\begin{abstract}
Well known in the theory of network flows, Braess paradox states that in a congested network, it may happen that adding a new path between destinations can increase the level of congestion.  In transportation networks the phenomenon results from the decisions of network participants who selfishly seek to optimize their own performance metrics.  In an electric power distribution network, an analogous increase in congestion can arise as a consequence Kirchhoff's laws.  Even for the simplest linear network of resistors and voltage sources, the sudden appearance of congestion due to an additional conductive line is a nonlinear phenomenon that results in a discontinuous change in the network state.  It is argued that the phenomenon can occur in almost any grid in which they are loops, and with the increasing penetration of small-scale distributed generation, it suggests challenges ahead in the operation of microgrids.
\end{abstract}

\begin{keywords} 
distribution networks, network congestion, loss cost, price of anarchy
\end{keywords}

\section{Introduction}

A good deal of current research on the operation of {\em smart grids} has been focused on the information structures and protocols that enable operation, \cite{Zhang2012},\cite{Zhang2013a},\cite{Zhang2013b},\cite{Zhang2014a},\cite{Zhang2014b}.  To facilitate load management at the level of a residential or commercial building microgrid, the authors have proposed the concept of packetized direct load control (PDLC).  Here the term packetized refers to a
fixed time energy usage authorization, with the emphasis being
on usage by thermostatic appliances (water heaters,
refrigerators, air conditioners, etc.) that have been aggregated into
pools defined by common operating characteristics. Taking HVAC, for example, 
consumers in each
room in a building choose their preferred set point, and then an operator ({\em smart building operator}, SBO) 
of the local appliance pool will determine an appropriate comfort band
around the set point. Our previous work has
shown that PDLC is capable of ensuring consumer comfort while at
the same time reducing power oscillations that occur when no control is
applied and appliances operate independently, \cite{Zhang2012} .  

While packet-switched energy distribution protocols appear to have numerous advantages in controlling demand oscillations from predictable loads, open questions remain concerning the operation of small-scale distribution networks in which multiple distributed generation (DG) sources inject power with hard-to-predict intermittency.  In such settings, stable and secure grid operation will be increasingly challenged by non-radial distribution network topologies in which lines connecting loads and power sources are opened and closed according to arrays of factors including weather, time-of-day, and real-time energy market conditions.

Focusing the discussion on network topology control, Section II begins with a review of {\em Braess paradox} in transportation networks.  Section III focuses on an analogous phenomenon in a voltage controlled circuit.  Here it is shown that changing the circuit topology by adding a small load can lead to relatively large losses in the circuit as a whole.  Section IV discusses similar sensitivity in optimal power flows within a simple non-radial distribution grid.  Section V extends the analysis of Section III to voltage-controlled circuits of arbitrary size.  Open problems are discussed in Section VI.

\section{Problem Formulation}

The principal challenge addressed here is the management of network congestion in the presence of load and DG (distributed generation) uncertainties.  We begin by recalling the well known Braess paradox that is generally associated with congestion in transportation networks.  The setup, in simple form, is shown in Fig.\ \ref{fig:jb:Braess}.  There is a network (of roads) with an origin {\sl O} and destination {\sl D}.  A certain number of travelers will make the journey, and in the  network in Fig.\ \ref{fig:jb:Braess}(A) they have a choice of the route with segments $\Large\mathpzc{AB}$ or segments $\mathpzc{ CD}$.  Congestion may enter either route in terms of travel times that depend on the number of users traveling on each segment.  If $f$ denotes the number of voyagers on the segment, the travel times on segments ${\mathpzc A}$ and ${\mathpzc D}$ are the same and equal to $f+\beta$ for some constant $\beta$.  The travel times on segments ${\mathpzc B}$ and ${\mathpzc C}$ are similarly equal to $\alpha f$ for some positive $\alpha$.  There are many different values of the parameters used in the literature, but the basic idea is that because the left-hand and right-hand routes in Fig.\ \ref{fig:jb:Braess} have the same congestion cost, $(\alpha+1)f+\beta$, introducing the cross link will break the cost symmetry and could cause the cost of travel to increase.  Taking the particular values of \cite{Steinberg}: $\alpha=10,\beta=50,\gamma(\cdot)\equiv 0$, and letting the total number of travelers be $f=6$, we find that without the cross link, the best choice for minimizing travel time is for three of the travelers to choose the left-hand route and three to choose the right hand route.  This is a Nash equilibrium.  The travel time for each traveler is $(\alpha + 1)f+\beta= 11\cdot 3+50=83$.  When the no-cost cross link is present,however, travelers will observe a possibly shorter route given by following the segments ${\mathpzc C}$-cross-link-${\mathpzc B}$.  Indeed, if only three of the six travelers took this route, the travel cost could be as low as $2\alpha f=60$.  Unfortunately, all six travelers may choose the route, in which case, the cost becomes $120$.  Taking the cost to be travel-time, Braess paradox is that adding a delay-free travel link can actually increase congestion and the users' travel time.

\begin{figure}[h]
\begin{center}
\includegraphics[height=0.35\columnwidth]{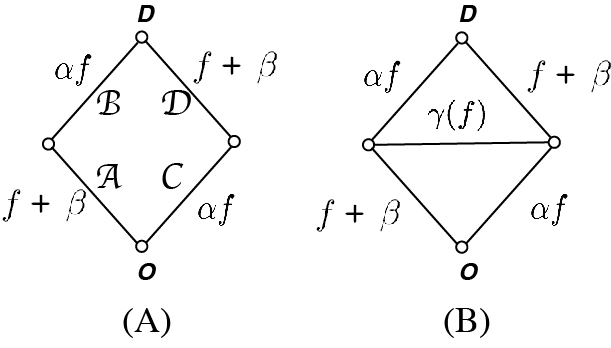}
\end{center}
\caption{The classic Braess paradox of congested network flows.}
\label{fig:jb:Braess}       
\end{figure}

Braess' paradox in this setting involves selfish social choices.  There is a similar apparent paradox in electric circuits, as noted by Cohen and Horowitz (\cite{Cohen}) that is a consequence of the laws of electrophysics.   In Fig.\ \ref{fig:jb:Circuit}, the question immediately arises as to how the horizontal connection changes the characteristics of the circuit on the top and bottom links.  Specifically, suppose that $R_2>R_1$.  Elementary invocation of Kirchhoff's current and voltage laws indicates that in the absence of the horizontal link, the currents $i_1$ and $i_2$ will be equal, but if the link is added with a moderate value of the resistance $R_3$, we will have $i_1>i_2$.  This change is not surprising and is consistent with the observation in \cite{Cohen} that introducing the path changes the voltage drop across the circuit.  It is also consistent with the observations in \cite{Blumsack1},\cite{Blumsack2} that adding this link may worsen congestion in similar models of power grid interconnections.  The interesting question posed in \cite{Blumsack1} and \cite{Blumsack2} is whether (and when) there is a useful tradeoff in grid design that balances the increased reliability of larger numbers of power lines against the congestion that may occur due to adding these lines to the network.  In what follows, we study a related question of when congestion may unexpectedly occur in a distribution network under the PDLC protocols described in \cite{Zhang2012}--\hspace{-1pt}\cite{Zhang2014b}.

\begin{figure}[h]
\begin{center}
\includegraphics[width=0.4\columnwidth]{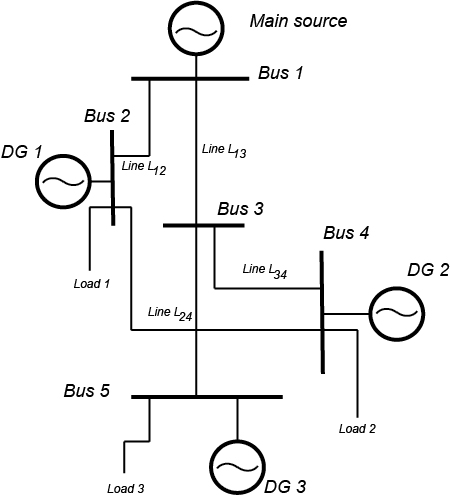}
\end{center}
\caption{Distribution grids of the future will feature increased amounts of local generation involving power sources with widely varying capacities and increased operating uncertainties. (Figure from Arghandeh {\em et al.} \cite{Arghandeh}).}
\label{fig:jb:Distribution}       
\end{figure}

Historically, distribution networks have had radial topologies, with distribution lines leading from a single trunk or power source to various commercial and residential loads.  This standard topology is likely to change with the ever increasing penetration of distributed generation in the form of wind, solar, plug-in electric vehicles, and many other forms of alternative energy.  Microgrids as depicted in Fig.\ \ref{fig:jb:Distribution} will become prevalent, and these will increasingly resemble miniaturized transmission networks in which sources and loads will be connected through a multiplicity of line links that can be opened and closed as needed to maintain the needed balance between capacity and demand.  It is against this backdrop that we examine the question of how the increased penetration of spatially distributed generation together with new electricity market models aimed at managing demand response will challenge the operating security of microgrid distribution networks.

\section{The effect of DG intermittency on congestion sensitivity to small changes in load}

Recent work has demonstrated how various communication protocols can be effectively employed in networks of smart microgrids to improve energy efficiency, decrease demand volatility, and ensure customer satisfaction---\cite{Zhang2012},\cite{Zhang2013a},\cite{Zhang2013b},\cite{Zhang2014a},\cite{Zhang2014b}.  Although this work has examined packet switched energy delivery in terms of temporal uncertainty in the operation of microgrids, the effects of spatial and network topology variability and uncertainty of both demand and DG capacity is not well understood.  While the precise magnitude of costs associated with mitigating the uncertainties of renewable generation sources  is not known, some estimates suggest that higher reserve margins will be required.  For instance, the historical averages of reserve requirements in the power grid point to 7\% to 8\% as generally sufficient to handle contingencies.  There are now predictions that if renewable penetration gets to the 33\% level (still a long way off) these requirements may go as high as 15\%, \cite{LaTimes}.

\begin{figure}[h]
\begin{center}
\includegraphics[width=0.35\columnwidth]{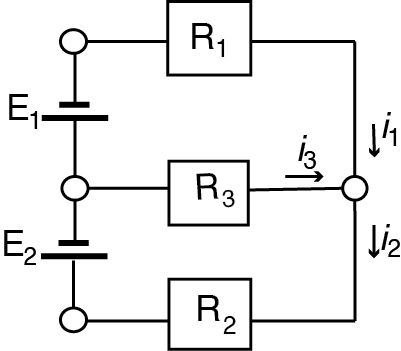}
\end{center}
\caption{A simple circuit with DC-voltage sources and resistive loads.}
\label{fig:jb:Circuit}       
\end{figure}

The implications of renewables for transmission networks remains a work in progress.  With the increasing likelihood that distribution networks will incorporate small-scale distributed generation as depicted in Fig.\ \ref{fig:jb:Distribution}, we turn to the question of how these intermittent sources will affect network congestion on a small scale.  Our working definition of congestion will be in terms of Fig.\ \ref{fig:jb:Circuit}, where we say that there is congestion if there is a significant difference between currents $i_1$ and $i_2$ and more importantly, between the energy losses due to heat $i_1^2R_1$ and $i_2^2R_2$.  We note that if  there is no horizontal link (i.e. if $i_3=0$ or equivalently, if $R_3\sim\infty$), then $i_1=i_2$, and if $R_1=R_2$, the heat losses at each resistor are equal as well.  These conclusions remain true irrespective of the magnitudes of the voltage sources.  If  $R_1=R_2$ and the voltage sources happen to be equal in magnitude, then $i_3=0$ no matter what value is assigned to $R_3$.  If there is any imbalance in the voltages $E_1$ and $E_2$, or if one of them is zero (think of a wind turbine or solar array being out of service due to weather conditions), then a small value of $R_3$ on the cross link of Fig. \ref{fig:jb:Circuit} can produce a very large difference in the currents $i_1$ and $i_2$.  We summarize this in the following:

\begin{proposition}
Suppose ${\rm R_1}={\rm R _2}={\rm R}$ and that ${\rm E_1}=0$.  Then if the cross link is not connected in Fig.\ \ref{fig:jb:Circuit} we have that $i_1=i_2={\rm E_2}/2{\rm R}$ while if the cross {\em is} connected and ${\rm R_3}$ is positive but small, we have $0\sim i_1\ll i_2\sim {\rm E_2}/{\rm R}$.  Moreover, the total heat loss across across the entire circuit is approximately twice the loss if the cross link is disconnected.
\end{proposition}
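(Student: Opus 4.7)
The plan is to reduce the claim to a direct application of Kirchhoff's voltage and current laws, analyzing the ``cross-link open'' and ``cross-link closed'' configurations separately and then comparing the resulting power dissipations.

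First I would dispose of the disconnected case. Without the cross link, the circuit of Fig.\ \ref{fig:jb:Circuit} collapses to a single series loop containing $E_1$, $R_1$, $R_2$, $E_2$, so KVL gives a common loop current $i_1 = i_2 = (E_1 + E_2)/(R_1 + R_2)$. Under the hypotheses $E_1 = 0$ and $R_1 = R_2 = R$ this is $E_2/(2R)$, and the corresponding heat loss is $P_0 = 2\,(E_2/(2R))^2 R = E_2^2/(2R)$.

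For the connected case I would use mesh analysis, introducing mesh currents $I_a$ and $I_b$ for the upper and lower loops so that $i_1 = I_a$, $i_2 = I_b$, and $i_3 = I_a - I_b$. The two KVL equations form a $2\times 2$ linear system in $(I_a, I_b)$ whose coefficient matrix depends on $R_3$. Under the stated hypotheses the solution is explicit and rational in $R_3$: the numerator of $I_a$ carries a factor of $R_3$, while $I_b$ approaches $E_2/R$. Sending $R_3 \to 0^+$ therefore yields $i_1 \to 0$ and $i_2 \to E_2/R$, which is the first part of the claim.

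The heat-loss statement then follows by substituting the solutions into $P_1 = I_a^2 R + I_b^2 R + i_3^2 R_3$ and again letting $R_3 \to 0^+$. The one point worth flagging, and the main bookkeeping pitfall, is the cross-link term: although $|i_3| = |I_a - I_b| \to E_2/R$ remains of order unity, the explicit factor of $R_3$ forces the contribution $i_3^2 R_3$ to vanish in the limit. What survives is $I_b^2 R \to (E_2/R)^2 R = E_2^2/R$, which is exactly $2P_0$, establishing the claim that the total loss is approximately twice the disconnected-case loss. Beyond this observation, and apart from keeping sign conventions for the EMFs consistent with the chosen mesh orientations, the argument is routine algebra with no deeper analytical obstacle.
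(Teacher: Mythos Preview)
Your proposal is correct and follows essentially the same route as the paper: both arguments solve the circuit via Kirchhoff's laws (the paper simply records the closed-form currents in its display, while you set up the equivalent mesh system) and then read off the small-$R_3$ behavior. Your treatment is in fact more explicit than the paper's, which omits the heat-loss bookkeeping entirely and just says ``the proof follows by setting $R_1=R_2=R$ and considering the equation for small positive values of $R_3$''; your observation that the $i_3^2R_3$ term vanishes in the limit despite $|i_3|$ remaining bounded away from zero is exactly the point the paper leaves to the reader.
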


\begin{proof}
A simple application of Kirchhoff's circuit laws shows that the currents $i_j$ are related to the circuit voltages and resistances by relationships that are linear in the voltages but subtly nonlinear in the resistances.
\begin{equation}
\left(
\begin{array}{c}
i_1\\ i_2\\ i_3 \\
\end{array}
\right) = 
\left(\begin{array}{c}
\frac{{\rm E_1}({\rm R_2+R_3) + E_2R_3}}{{\rm R_1R_2+R_1R_3+R_2R_3}}\\[0.1in]
\frac{{\rm E_1R_3 + E_2(R_1+R_3)}}{{\rm R_1R_2+R_1R_3+R_2R_3}}\\[0.1in]
\frac{{\rm -E_1R_2 + E_2R_1}}{{\rm R_1R_2+R_1R_3+R_2R_3}}\\
 \end{array}
\right)
\label{eq:jb:Kirchhoff}
\end{equation}

The proof follows by setting ${\rm R_1=R_2=R}$ and considering the equation for small positive values of ${\rm R_3}$.
\end{proof}

The dramatic effect on current flow produced by connecting the cross-link in Fig.\ \ref{fig:jb:Circuit} is illustrated in Fig.\ \ref{fig:jb:CircuitPlotCurrent}.  Connecting the link with a very small resistance $R_3$ brings the current $i_1$ to essentially zero, while doubling the current $i_2$.  It is not surprising that the losses on the line through which $i_2$ flows increase dramatically.  It is also the case that the total losses for the circuit double when the connection is made.

\begin{figure}[h]
\begin{center}
\includegraphics[width=0.6\columnwidth]{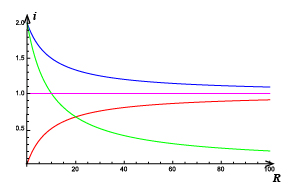}
\end{center}
\caption{The effect of connecting the cross link in Fig.\ \ref{fig:jb:Circuit}.  The magenta line indicates the per unit current values $i_1$,$i_2$ when the cross is not present.  The red, blue, and green curves are the values of the currents $i_1$,$i_2$, and $i_3$ (as functions of the resistance $R_3$) when the cross link is connected.  $R_1=R_2=R$ in all cases.}
\label{fig:jb:CircuitPlotCurrent}       
\end{figure}

\begin{definition} The {\it Loss Cost of the Link} (LCL) is defined as the ratio between the total network losses before and after the addition of a new link or other capacity enhancement.
Denoting the system losses before and after the capacity enhancement as $Loss'$ and $Loss$, respectively. The $LCL$ is defined as
\begin{equation}
LCL= \frac{Loss}{Loss'}.
\end{equation}
\end{definition}

The LCL concept is our version of the {\em price of anarchy} that is discussed in routing congestion problems in transportation networks, \cite{Steinberg}.  
Proposition 1 shows that the LCL for the balanced load network with $R_1=R_2$ is 2. The following Corollary extends the result to imbalanced networks of parallel connected resistors.

\begin{corollary} Suppose $E_1=0$, if the cross link is connected and $R_3$ is positive but small, then the price of anarchy of the Kirchhoff-Braess network is $R_1/R_2 + 1.$
\end{corollary}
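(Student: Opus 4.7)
The plan is to compute the two loss quantities $Loss$ and $Loss'$ directly from the current formulas (\ref{eq:jb:Kirchhoff}) in Proposition 1, then take their ratio. Since the hypothesis $E_1 = 0$ is the same as in Proposition 1 but the resistances $R_1, R_2$ are no longer constrained to be equal, the argument is simply a careful limit computation — no new physics enters beyond what Kirchhoff's laws already give in (\ref{eq:jb:Kirchhoff}).

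First, I would compute $Loss'$, the total loss before the cross link is added. Disconnecting the link amounts to sending $R_3 \to \infty$ in (\ref{eq:jb:Kirchhoff}), which with $E_1 = 0$ gives $i_1 = i_2 = E_2/(R_1+R_2)$ and $i_3 = 0$. Summing dissipation on $R_1$ and $R_2$ then yields $Loss' = E_2^2/(R_1+R_2)$. Next, I would compute $Loss$ after the link is connected, taking $R_3 \to 0^+$. Substituting $E_1 = 0$ into (\ref{eq:jb:Kirchhoff}) and passing to the limit gives $i_1 \to 0$, $i_2 \to E_2/R_2$, and $i_3 \to E_2/R_2$, so the natural candidate for the limiting total loss is $E_2^2/R_2$.

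The one subtle point — what I would flag as the main obstacle — is the treatment of the cross-link dissipation $i_3^2 R_3$. Even though $i_3$ is bounded away from zero as $R_3 \to 0^+$ (unlike the vanishing $i_1$), the contribution of this term to the total loss is $O(R_3)$ and therefore drops out in the limit. Once this observation is made explicit, the ratio is immediate:
\begin{equation*}
LCL \;=\; \frac{Loss}{Loss'} \;=\; \frac{E_2^2/R_2}{E_2^2/(R_1+R_2)} \;=\; \frac{R_1+R_2}{R_2} \;=\; \frac{R_1}{R_2} + 1,
\end{equation*}
which is the claimed price of anarchy. As a consistency check, setting $R_1 = R_2$ recovers $LCL = 2$, the balanced value from Proposition 1.
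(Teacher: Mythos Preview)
Your proof is correct and follows essentially the same route as the paper: both compute the currents from (\ref{eq:jb:Kirchhoff}) with $E_1=0$, evaluate the pre-link loss as $E_2^2/(R_1+R_2)$ and the post-link loss (for small $R_3$) as approximately $E_2^2/R_2$, and take the ratio. Your explicit remark that $i_3^2 R_3 = O(R_3)$ vanishes in the limit is a welcome clarification that the paper leaves implicit in its ``$R_3 \ll R_1,R_2$'' approximation.
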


\begin{proof}
Based on Kirchhoff's circuit Law, the current $i_j$ that flows through the resistor $R_j,j=1,2,3$ are
\begin{eqnarray}
i_1 = \frac{E_2 R_3}{R_1R_2+R_2R_3+R_3R_1},\nonumber\\
i_2 = \frac{E_2 (R_1+R_3)}{R_1R_2+R_2R_3+R_3R_1},\nonumber\\
i_3 = \frac{E_2 R_1}{R_1R_2+R_2R_3+R_3R_1}.\nonumber
\end{eqnarray}
The total system loss for small value of $R_3 \ll R_1,R_2$ is
\begin{equation}
Loss = \sum_{j=1}^{3} i_j^2R_j \approx \frac{E_2^2}{R_2}.
\end{equation}
On the other hand, the system loss before adding $R_3$ is 
\begin{equation}
Loss' = \frac{E_2^2}{(R_1+R_2)}.
\end{equation}
From the above two equations, we have the LCL is 
\begin{equation}
LCL = \frac{Loss}{Loss'} = \frac{R_1}{R_2} + 1 \geq 1.
\end{equation}
This ends the proof of Proposition 1. 
\end{proof}

A general scenario for Proposition 1 is to consider that two voltage sources are connected in the opposite direction at the same time, namely both $E_1$ and $E_2$ are non-zero. In typical distribution networks, the level of voltage is the same, say 110V at the user level. Therefore a much more complex electric network with the same level of voltage sources can be equivalently transformed into the network of two voltage sources connected in the opposite direction with a series connected to an equivalent resistor. In the following proposition, we show that the LCL will increase as the imbalance increases between the two resistors of the equivalent circuit.

\begin{proposition} Suppose that $E_1 = E_2 = E$. The LCL of Def.\ 1 will increase monotonically as $|R_1-R_2|$ increases. 
\end{proposition}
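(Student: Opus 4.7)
The plan is to specialize equation~(\ref{eq:jb:Kirchhoff}) to $E_1 = E_2 = E$, compute the pre- and post-link losses in the small-$R_3$ regime exactly as in the proof of Corollary~1, form the ratio, and read off the dependence on $|R_1 - R_2|$.

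First I would compute $Loss'$: with the cross link absent (equivalently $R_3 \to \infty$ in (\ref{eq:jb:Kirchhoff})), the circuit reduces to a single loop and $i_1 = i_2 = 2E/(R_1+R_2)$, so
\[
Loss' \;=\; i_1^2 R_1 + i_2^2 R_2 \;=\; \frac{4E^2}{R_1+R_2}.
\]
Next I would evaluate $Loss$ by substituting $E_1 = E_2 = E$ into (\ref{eq:jb:Kirchhoff}) and letting $R_3 \to 0^+$, which yields $i_1 \to E/R_1$, $i_2 \to E/R_2$, and $i_3 \to E(R_1-R_2)/(R_1R_2)$. The term $i_3^2 R_3$ is $O(R_3)$ and drops out, so
\[
Loss \;\approx\; \frac{E^2}{R_1} + \frac{E^2}{R_2} \;=\; \frac{E^2(R_1+R_2)}{R_1 R_2}.
\]

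Forming the ratio and using the identity $(R_1+R_2)^2 = 4R_1R_2 + (R_1-R_2)^2$, I would arrive at
\[
LCL \;=\; \frac{Loss}{Loss'} \;=\; \frac{(R_1+R_2)^2}{4R_1R_2} \;=\; 1 + \frac{(R_1-R_2)^2}{4R_1R_2}.
\]
Monotonicity then follows by direct differentiation: holding $R_2$ fixed,
\[
\frac{d\,LCL}{dR_1} \;=\; \frac{(R_1-R_2)(R_1+R_2)}{4R_1^2 R_2},
\]
which is negative for $R_1 < R_2$ and positive for $R_1 > R_2$. Thus $LCL$ attains its minimum value $1$ at $R_1 = R_2$ and strictly increases as $|R_1-R_2|$ grows; by symmetry the same holds if $R_1$ is held fixed.

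The calculation is otherwise routine. The only real subtlety is that the statement ``$LCL$ increases as $|R_1-R_2|$ increases'' is well defined only after fixing a parameterization of the two resistors, since $LCL$ is not a function of $|R_1-R_2|$ alone (e.g.\ $(R_1,R_2)=(2,1)$ and $(10,9)$ share the same $|R_1-R_2|$ but give very different $LCL$); however, the identity $LCL = 1 + (R_1-R_2)^2/(4R_1R_2)$ establishes strict monotonicity under each natural choice (one resistor fixed, sum fixed, or product fixed), which covers every reading intended by the authors.
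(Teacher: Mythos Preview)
Your approach is essentially the paper's: compute the currents from (\ref{eq:jb:Kirchhoff}) with $E_1=E_2=E$, form $Loss$ and $Loss'$, take the ratio, and differentiate. Your treatment of the parameterization ambiguity in ``$|R_1-R_2|$ increases'' is in fact more explicit than the paper's, which simply introduces $h=|R_1-R_2|$, rewrites the ratio in terms of $h$ and one of the $R_i$, and appeals to elementary calculus.

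There is, however, one gap in scope. Unlike Proposition~1 and Corollary~1, this proposition carries \emph{no} small-$R_3$ hypothesis, and the paper's proof reflects that: it keeps $R_3>0$ arbitrary throughout and obtains the closed form
\[
LCL \;=\; 1+\frac{(R_1-R_2)^2}{4\bigl(R_1R_2+R_1R_3+R_2R_3\bigr)}\,,
\]
from which monotonicity in $|R_1-R_2|$ follows by the same differentiation you perform. Your computation instead passes to the limit $R_3\to 0^{+}$ at the outset and arrives at the special case $LCL=1+(R_1-R_2)^2/(4R_1R_2)$, so strictly speaking you have proved a weaker statement than the one asserted. The fix is painless: do not take the limit. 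Retain $R_3$ in the currents, and a short calculation (the $i_3^2R_3$ term no longer vanishes) collapses to the displayed formula above; your monotonicity argument then applies verbatim with $R_1R_2$ replaced by $R_1R_2+R_1R_3+R_2R_3$.
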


\begin{proof}
According to Kirchhoff's circuit Law, the current $i_j$ flows through the resistors are
\begin{eqnarray}
i_1 = \frac{E (R_2+2R_3)}{R_1R_2+R_2R_3+R_3R_1},\nonumber\\
i_2 = \frac{E (R_1+2R_3)}{R_1R_2+R_2R_3+R_3R_1},\nonumber\\
i_3 = \frac{E (R_1-R_2)}{R_1R_2+R_2R_3+R_3R_1}.\nonumber
\end{eqnarray}
The total system loss is 
\begin{equation}
Loss = i_1^2R_1 +  i_2^2R_2+ i_3^2R_3.
\end{equation}
Since the total system loss before connecting the resistor $R_3$ is
\begin{equation}
Loss' = \frac{(2E)^2}{R_1+R_2},
\end{equation}
we will have the LCL as follows
\begin{equation}
\frac{Loss}{Loss'} = \frac{(R_1-R_2)^2}{4(R_1R_2+R_2R_3+R_3R_1)}+1\geq 1,
\label{eq:jb:LCL}
\end{equation}
where equality occurs if and only if $R_1=R_2$. 
For an unbalanced resistor network (i.e.\ $R_1\ne R_2$), rewrite the expression for LCL in (\ref{eq:jb:LCL}) by introducing the positive variable $h=|R_1-R_2|$.  If we rewrite (\ref{eq:jb:LCL}) in terms of $h,R_1,R_3$ or $h,R_2,R_3$, a straightforward argument using elementary calculus and counting cases shows that for any $R_1,R_2,R_3>0,\ R_1\ne R_2$, LCL is an increasing function of $h$
\end{proof}


In decision theory, it is frequently desirable to base decisions on criteria that are known (or can be proven) to be monotonic in the decision variables.  The previous section showed that if there is a lack of balance in the voltage source distribution that large current imbalances (congestion) can occur.  In this section, we note that if there are imbalances in the resistances in our model network, then there will be a non-monotonic dependence of the losses $i_1^2R_1$ and $i_2^2R_2$ on the magnitude of the resistance $R_3$.

\begin{proposition}
Referring to the circuit of Fig.\ \ref{fig:jb:Circuit}, suppose $R_1<R_2$.  If $E_1=E_2$, then there is a non-monotonic dependence of the losses $i_1^2R_1$ and $i_2^2R_2$ on the cross-link resistance $R_3$.  Specifically, there is a critical value $R_3^{cr}$ such that for $R_3<R_3^{cr}$, $i_1^2R_1<i_2^2R_2$, while for $R_3>R_3^{cr}$,  $i_1^2R_1>i_2^2R_2$.
\end{proposition}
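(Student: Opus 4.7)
The plan is to exploit the explicit current formulas already derived in the proof of Proposition 2. Setting $E_1=E_2=E$, we have
\begin{equation*}
i_1=\frac{E(R_2+2R_3)}{D},\qquad i_2=\frac{E(R_1+2R_3)}{D},
\end{equation*}
with $D=R_1R_2+R_2R_3+R_1R_3>0$. Because the two losses $i_1^2R_1$ and $i_2^2R_2$ share the common positive denominator $D^2$, comparing them reduces to comparing the numerator polynomials $R_1(R_2+2R_3)^2$ and $R_2(R_1+2R_3)^2$, a clean one-variable task in $R_3$ once $R_1,R_2$ are held fixed.

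The decisive step I would carry out is to expand and collect this numerator difference. The cross-terms $4R_1R_2R_3$ cancel, leaving the compact factorization
\begin{equation*}
R_1(R_2+2R_3)^2-R_2(R_1+2R_3)^2=(R_2-R_1)\bigl(R_1R_2-4R_3^2\bigr).
\end{equation*}
Since $R_1<R_2$ by hypothesis, the first factor is strictly positive, so the sign of $i_1^2R_1-i_2^2R_2$ is governed entirely by the quadratic $R_1R_2-4R_3^2$. This vanishes exactly once on the positive axis, at
\begin{equation*}
R_3^{cr}=\tfrac{1}{2}\sqrt{R_1R_2},
\end{equation*}
which is the critical resistance whose existence the proposition asserts. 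Checking the two boundary regimes $R_3\downarrow 0$ (where the losses reduce to $E^2/R_j$) and $R_3\uparrow\infty$ (where both converge to $4E^2R_j/(R_1+R_2)^2$) then pins down the inequality direction on each side of $R_3^{cr}$ by a direct substitution.

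To justify the phrase \emph{non-monotonic dependence}, I would further verify that each individual loss is in fact monotone in $R_3$, so that the crossover is single and genuine. Writing $i_1^2R_1=R_1(u/v)^2$ with $u=R_2+2R_3$ and $v=R_1R_2+(R_1+R_2)R_3$, an elementary quotient-rule calculation gives $(u/v)'=R_2(R_1-R_2)/v^2<0$, whence $i_1^2R_1$ is strictly decreasing; the symmetric computation shows $i_2^2R_2$ is strictly increasing. So the two loss curves cross exactly once, at $R_3^{cr}$, producing the claimed reversal. Nothing in this argument presents a real obstacle---it is all elementary algebra once Proposition 2's current formulas are invoked. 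The only piece of mild cleverness is noticing the factorization $(R_2-R_1)(R_1R_2-4R_3^2)$, which delivers the critical resistance in closed form rather than as an implicit root.
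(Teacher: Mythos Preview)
Your approach is essentially the paper's: both proofs reduce the comparison of $i_1^2R_1$ and $i_2^2R_2$ to comparing the numerators $R_1(R_2+2R_3)^2$ and $R_2(R_1+2R_3)^2$, and both appeal to the small-$R_3$ and large-$R_3$ regimes to exhibit the sign change. Your factorization $(R_2-R_1)(R_1R_2-4R_3^2)$ is a genuine improvement over the paper, which only argues qualitatively that the constant terms dominate for small $R_3$ while the $R_3^2$ terms dominate for large $R_3$; you extract the critical value $R_3^{cr}=\tfrac12\sqrt{R_1R_2}$ in closed form. The monotonicity paragraph you add is also absent from the paper and is a nice strengthening.

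One point worth flagging: your computations (like the paper's own proof) actually establish the \emph{reverse} of the inequalities written in the proposition statement. From your factorization, $i_1^2R_1-i_2^2R_2$ has the sign of $R_1R_2-4R_3^2$, so $i_1^2R_1>i_2^2R_2$ for $R_3<R_3^{cr}$ and $i_1^2R_1<i_2^2R_2$ for $R_3>R_3^{cr}$; your limiting values $E^2/R_j$ and $4E^2R_j/(R_1+R_2)^2$ confirm this. The paper's proof reaches the same conclusion (its (\ref{eq:jb:LHS}) exceeds (\ref{eq:jb:RHS}) for small $R_3$), so the discrepancy is in the statement, not in either argument. You should state the inequality directions explicitly rather than leaving them to ``a direct substitution,'' and note the mismatch.
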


\begin{proof}
We compare
\[
i_1^2R_1 =\left(E_2R_3+E_1(R_2+R_3)\right)^2 R_1/D_e \ \ {\rm with} 
\]
\[
i_2^2R_2 =\left(E_1R_3+E_2(R_1+R_3)\right)^2 R_2/D_e 
\]
where $D_e=(R_1R_2+R_1R_3+R_2R_3)^2$.  The loss $i_1^2R_1$ will be larger than $i_2^2R_2$ precisely when the numerators of these expressions have the same magnitude relationships.  Recalling $E_1=E_2=({\rm say})\ E$, we have
\begin{equation}
i_1^2R_1\sim R_1E^2\left(R_2^2+4 R_2R_3+4 R_3^2\right)
\label{eq:jb:LHS}
\end{equation}
and 
\begin{equation}
i_2^2R_2\sim R_2E^2\left(R_1^2+4 R_1R_3+4 R_3^2\right)
\label{eq:jb:RHS}
\end{equation}
Obviously, for small $R_3>0$, the expression (\ref{eq:jb:LHS}) is greater than the expression (\ref{eq:jb:RHS}), but as $R_3$ becomes larger, the terms that are quadratic in $R_3$ dominate---making the expression (\ref{eq:jb:RHS}) the larger.
\end{proof}

\section{Optimal power-flow sensitivity to small changes in network parameters}

The kinds of sensitivity illustrated in the resistive load circuits in the preceding sections may be found as well in optimal power flow.  We revisit Example 6.16, pp. 252-254 in \cite{Gomez}.

\begin{example}
Consider the three-node network of Fig.\ \ref{fig:jb:PowerFlow}.  We examine a DC power flow model of the three bus network in which bus 1 and bus 2 are generators, while bus 3 is a load.  The production costs of operating the generators are $C_j(P_j)$ for $j=1,2$, where $P_j$ is the nodal power injection at the $j$-th bus.  These costs of generation are convex functions, reflecting the fact that as the power increases, the incremental cost rises superlinearly due to wear and tear on the machinery, decreased efficiency margins. The ``elastic price'' load at bus 3 is $P_D=P_C+P_E$, where $P_C$ is the inelastic component of the load and $P_E$ is the ``price elastic'' component of the load.  The line inductive reactances are $x_{1,2}$, $x_{13}$, $x_{23}$.  The power flow $P_{ij}$ on line $ij$ is given by
\begin{equation}
\frac{\theta_i-\theta_j}{x_{ij}},
\label{eq:jb:LineLoads}
\end{equation}
where $\theta_j$ is the power phase angle at the $j$-bus.  The nodal power injections are related to the power phase angles by the conductance matrix:
\begin{equation}
\left(\begin{array}{c}
P_1 \\
P_2\\
P_3
\end{array}
\right)  = B 
\left(
\begin{array}{c}
\theta_1 \\
\theta_2 \\
\theta_3
\end{array}
\right),
\label{eq:jb:conductance}
\end{equation}
where
\[
B = 
\left(
\begin{array}{ccc}
-b_{12}-b_{13} & b_{12} & b_{13}\\
b_{12} & -b_{12}-b_{23} & b_{23}\\
b_{13} & b_{23} & -b_{13}-b_{23} 
\end{array}\right).
\]
and where the line conductances $b_{ij}$ are the negative reciprocals of the line reactances, i.e.\ $-\frac{1}{x_{ij}}$.  The nodal power injections always sum to zero, as do the columns and rows of $B$, and since the power flow equations are invariant under a common phase shift of the $\theta_j$'s it is convenient to choose a reference bus (say bus 1) at which we set the phase angle $=0$.

The {\em optimal power flow} problem is to determine the nodal power injections $P_1$ and $P_2$ at the generator buses and the line flows $P_{ij}$ that optimize an objective function that accounts for generation costs $C_1(P_1)$ and $C_2(P_2)$ along with a consumer welfare cost that in the simplest formulation is evaluated only in terms of the price elastic  load at bus 3, $C_W(P_E)$.  The objective function to be minimized is written as $C_1(P_1)+C_2(P_2)-C_W(P_E)$.  Minimization is subject to constraints that the power flow solution does not exceed the rated capacities of the lines or buses.  Thus, feasible solutions must satisfy 
\begin{equation}
0\le P_j\le P_j^{max} \ \ \ {\rm and}\ \ 0\le P_{ij}\le P_{ij}^{max}.
\label{eq:jb:Constraints}
\end{equation}
In a power network, {\em congestion} is said to occur if the scheduled or desired power flow exceeds the rated capacity of either one or more of the lines or one or more of the generator buses.  

\begin{figure}[h]
\begin{center}
\includegraphics[width=0.75\columnwidth]{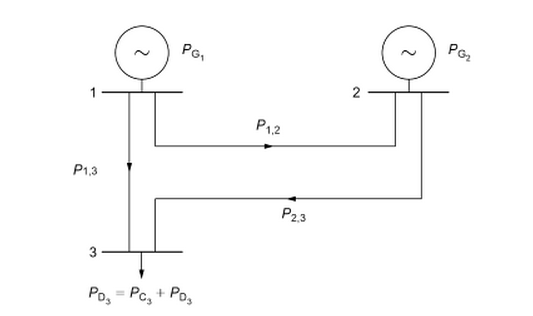}
\end{center}
\caption{From Gomez-Exposito {\em et al.}, \cite{Gomez}.}
\label{fig:jb:PowerFlow}       
\end{figure}

\begin{proposition}
For the three-bus network depicted in Fig.\ \ref{fig:jb:PowerFlow}, if the power flow solution (the phases in $\theta_j$ (\ref{eq:jb:conductance})) that minimizes the objective function
\begin{equation}
C_1(P_1)+C_2(P_2)-C_W(P_E)
\label{eq:jb:Objective}
\end{equation}
does not result in congestion---i.e.\ if none of the constraints (\ref{eq:jb:Constraints}) holds with equality---then the optimal value of (\ref{eq:jb:Objective}) is independent of the  conductance matrix $B$.
\end{proposition}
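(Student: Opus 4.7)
The plan is to re\nobreakdash-parameterize the optimal power flow problem in terms of the nodal injections $(P_1,P_2,P_E)$ rather than the phase angles, so that the conductance matrix $B$ enters only through the line\nobreakdash-flow constraints. The non\nobreakdash-congestion hypothesis will then force those very constraints to be inactive at the optimum, after which $B$ no longer appears anywhere in the reduced optimality conditions.

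First I would fix the reference phase $\theta_1=0$ and exploit the invertibility of the $2\times 2$ principal submatrix of $B$ obtained by deleting row and column $1$ (it is strictly diagonally dominant whenever the three line conductances are nonzero) to write $(\theta_2,\theta_3)$ as an explicit linear function of $(P_2,P_3)$. Because the columns of $B$ sum to zero, the nodal equation at bus $1$ reduces to the balance identity $P_1+P_2=P_C+P_E$. Substituting the expressions for the $\theta_j$'s back into (\ref{eq:jb:LineLoads}) then expresses each line flow $P_{ij}$ as a linear function of $(P_1,P_2,P_E)$ whose coefficients depend on $B$. In this parameterization the full OPF becomes the problem of minimizing $C_1(P_1)+C_2(P_2)-C_W(P_E)$ over $(P_1,P_2,P_E)$ subject to the balance identity, the generator and demand bounds $0\le P_j\le P_j^{\max}$, and the family of line\nobreakdash-flow inequalities $0\le P_{ij}(P_1,P_2,P_E;B)\le P_{ij}^{\max}$. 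Crucially, only the last family involves $B$.

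Now define the \emph{reduced} problem by simply deleting the line\nobreakdash-flow inequalities, and let $V^\ast$ denote its optimal value. Since the full feasible set is contained in the reduced feasible set for every $B$, we have $V(B)\ge V^\ast$ unconditionally. Under the hypothesis that at the full minimizer every line\nobreakdash-flow inequality is strictly inactive, complementary slackness kills every multiplier attached to those constraints, so the KKT conditions for the full problem at that point collapse onto those of the reduced problem. Under the customary convexity of the generation costs $C_j$ and concavity of the welfare function $C_W$, these KKT conditions are also sufficient, so the full minimizer is a minimizer for the reduced problem and $V(B)=V^\ast$. Because $V^\ast$ is defined without reference to $B$, the proposition follows.

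The main obstacle, and the step I would present most carefully, is the last implication: translating ``no line\nobreakdash-flow constraint is active at the full optimum'' into the global statement ``the full optimal value equals the reduced optimal value.'' The KKT/convexity route above is the cleanest. Without convexity one can at best argue local insensitivity of $V(B)$ to small perturbations of $B$ that preserve interiority of all flows, which falls short of the $B$\nobreakdash-independence asserted by the proposition; convexity (or, failing that, a uniqueness\nobreakdash-of\nobreakdash-minimizer hypothesis) is what promotes the local statement to a global one.
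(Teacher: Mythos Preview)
Your proposal is correct and follows essentially the same route as the paper: reparameterize from phase angles to nodal injections via the invertibility of a $2\times 2$ reduced matrix (the paper uses the map $(\theta_2,\theta_3)\mapsto(P_1,P_2)$ through $B_r$, you use the principal submatrix giving $(P_2,P_3)$, which is an inessential variant), observe that the objective written in injection variables is free of $B$, and then use the non\nobreakdash-congestion hypothesis to identify the constrained optimum with the unconstrained one. The only difference is one of rigor: the paper simply sets the partial derivatives of the angle\nobreakdash-parameterized objective to zero and remarks that the resulting $(P_1,P_2)$ coincides with what one would get by minimizing (\ref{eq:jb:Objective}) directly, leaving the role of the inactive constraints implicit; you make that step explicit via the reduced problem, complementary slackness, and convexity of the costs. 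Your added discussion of why convexity is needed to promote the local KKT statement to a global one is a genuine improvement over the paper's argument.
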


\begin{proof}
The power injections are related to the power phase angles by
\begin{equation}
\left(\begin{array}{c}
P_1\\
P_2\end{array}
\right) =
B_r
\left(
\begin{array}{c}
\theta_2 \\
\theta_3
\end{array}
\right),
\label{eq:jb:PowerPhase}
\end{equation}
where
\[
B_r =
\left(
\begin{array}{cc}
b_{12}  &b_{13} \\
-b_{12} - b_{23} & b_{23}
\end{array}
\right)
\]
is the reduced conductance matrix, and 
where we have assumed without loss of generality that the power phase angle at bus 1 is zero.  If no two of the conductances $b_{ij}$ are zero, then $B_r$ is nonsingular.  Using the invertible relationship (\ref{eq:jb:PowerPhase}), we may rewrite the objective function as
\[
f(\theta_2,\theta_3) = \hat C_1(\theta_2,\theta_3)+\hat C_2(\theta_2,\theta_3)-\hat C_W(\theta_2,\theta_3).
\]
Solving the critical point equations $\partial f/\partial\theta_j=0$ for $j=2,3$ leads to a minimizing solution, and this determines the optimal $(P_1,P_2)$ via (\ref{eq:jb:PowerPhase}).  This solution must be the same as what would have been obtained by minimizing (\ref{eq:jb:Objective}) directly.

\end{proof}

\begin{remark}
It holds much more generally that for a connected power grid, the optimal power injections from generators is independent of the line conductances {\em provided there is no congestion}.  The proof is a direct extension of the above but is omitted due to space limitations.  We instead examine the sensitivity of congestion to generation cost in the simple example at hand.\end{remark}

Adopting the cost functions of \cite{Gomez}, $C_j(P_j)=\beta_j P_j^2$ and $C_W(P_E)=\alpha P_E$, and recalling that $P_E=P_1+P_2-P_C$, the optimal power injections are easily seen to satisfy
\begin{equation}
P_j=\frac{\alpha}{2\beta_j},\ \ j=1,2,\ \ {\rm and}\ \ P_E=\frac12 \left(\frac{\alpha}{\beta_1} + \frac{\alpha}{\beta_2} - 2P_C\right).
\label{eq:jb:PowerInjection}
\end{equation}
Clearly, the cost coefficients $\alpha,\beta_1,\beta_2$ must be such that  the power injections are within the ranges (\ref{eq:jb:Constraints})---specifically the marginal value of consumer preference for load price elasticity must be in balance with the marginal costs of generation.  To evaluate the line loading produced by the power injections (\ref{eq:jb:PowerInjection}), we solve (\ref{eq:jb:PowerPhase}) and use (\ref{eq:jb:LineLoads}).  This yields

\begin{equation}
\begin{array}{ccl}
P_{12} & = &\frac{ \alpha}{ 2 \beta_1} \frac{b_{12} b_{23}}{D} -  \frac{\alpha} {2 \beta_2} \frac{ b_{12} b_{13} }{D}\\[0.1in] 
P_{13} & = & \frac{ \alpha}{ 2 \beta_1} \frac{ (b_{12}+b_{23}) b_{13} }{ D}  
   + \frac{ \alpha}{ 2 \beta_2}\frac{ b_{12} b_{13}}{ D} \\[0.1in]
 P_{23} & = &\frac{\alpha}{2 \beta_1 } \frac{ b_{12} b_{23}}{D }
  +  \frac{\alpha}{2 \beta_2}\frac{(b_{12} + b_{13}) b_{23}}{D},
\end{array}
\end{equation}
where
\[
D= b_{12} b_{13}+b_{23} b_{13}+b_{12} b_{23}.
\]
\end{example}

\vspace{0.15in}
It is expected that power grids will exhibit  the same kinds if sensitivity to changes in network topology and operating parameters that were noted in Sections III and IV.  
For the generation cost values considered in Example 6.16 in \cite{Gomez} ($\beta_1=1,\beta_2=1.675$), the explicit form of the uncongested optimum power injection at generator nodes 1 and 2 favors power produced by the cheaper generator (generator 1), although it is never the case that the DC load flow results in zero power being injected at bus 2.  The line loading between the less costly generator and the load ($P_{13}$) turns out to be a monotonically increasing function of the conductance $b_{13}$.  It is interesting to note that if $b_{13}$ is small enough in relation to $b_{23}$, the line loading will have $P_{23}>P_{13}$. and the load will draw more power from the line to the more expensive generator $G_2$.



Write the phase and line-loading relationship in matrix form: $P_{line}= H\cdot (\theta_2,\theta_3)^T$, where $P_{line}=(P_{12},P_{13},P_{23})^T$ and $H$ is the matrix representation specified by (\ref{eq:jb:LineLoads}):

\[
H=\left(\begin{array}{cc}
b_{12} & 0\cr
0& b_{13}\cr
-b_{23} & b_{23}
\end{array}\right).
\]
We can then express the line loadings directly in terms of the power injections by writing
\[
\left(\begin{array}{c}
P_{12}\cr P_{13} \cr P_{23} \end{array} \right) = H\;B_r^{-1} \left(\begin{array}{c}
P_1\cr P_2
\end{array}\right).
\]
For small values of $|b_{13}|$, this relationship is
\[
\left(\begin{array}{c}
P_{12}\cr P_{13} \cr P_{23} \end{array} \right) =\left(\begin{array}{cc}
1+  \epsilon & \epsilon \cr
 \epsilon & \epsilon \cr
 1+  \epsilon & 1+  \epsilon \end{array}\right)
 \left(\begin{array}{c}
P_1\cr P_2
\end{array}\right),
\]
where $\epsilon={\cal O}(b_{13})$.  With the power injected by generator $G_1$ being shifted from line $(1,3)$ to lines $(1,2)$ and $(2,3)$, it is reasonable to expect that congestion on these lines will be sensitive to changes in $P_1$ and generation cost parameter $\beta_1$.  Indeed a straightforward calculation shows that there is extreme sensitivity to the cost parameter with
\[
P_{23} = \frac{C}{\beta_1} + f(|b_{13}|),
\]
where $C$ is a positive constant, and $f$ is a smooth function of $|b_{13}|$.  

Once an uncongested optimum power flow lies outside the operating range of any component (\ref{eq:jb:Constraints}), the operating limit of that component becomes a binding constraint in terms of which the optimal power flow problem must be resolved.  (See \cite{Gomez}).  Rather than pursuing constrained optimal power flow at this point, we briefly explore the pervasiveness of the Kirchhoff-Braess phenomena in larger networks.

\section{The Case of Large Networks}

We shall consider the effect of attaching an arbitrary two-port voltage controlled circuit (e.g.\ a single resistor or single voltage source in the simplest cases) to any two points of an existing voltage controlled circuit of arbitrary topology.  It will be shown that the LCL resulting from the attachment will be $\ge 1$ in all cases.  We begin by recalling that a voltage controlled DC circuit is made up of resistors, capacitors, inductors, and voltage sources.  We have the following: 
\begin{definition} 
Points in the circuit at which two or more circuit elements are connected are called {\em nodes}.
\end{definition}

Fig. 1 shows a simple circuit with 5 nodes. For a DC circuit that is comprised purely of voltage sources and resistors, we can first choose one node to be the reference, usually called ground node. Then Kirchhoff's Law allows for the calculation of the voltages at each node of the circuit, relative to the reference ground node. Once all the node voltages are known, all currents in the circuit can be determined easily.

\begin{figure}[h]
\begin{center}
\includegraphics[width=0.55\columnwidth]{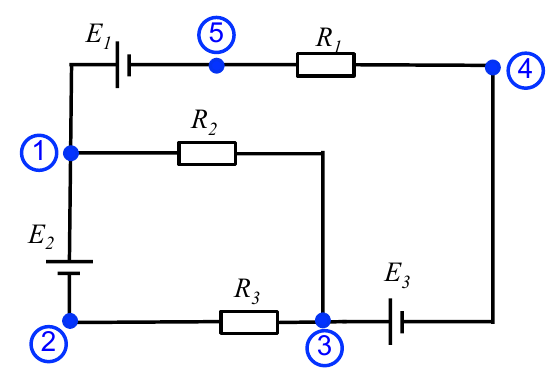}
\end{center}
\caption{A circuit with 5 nodes.}
\label{fig:node}   
\end{figure}

\begin{definition}For the graph of a DC circuit comprised purely of voltage sources and resistors, a {\em fundamental node basis} is a maximal set of nodes among which there exist no paths comprised purely of voltage source links. 
\end{definition}

A fundamental node basis may be formed as follows. We first find the set of all nodes $V$ where two or more circuit elements meet, and then create a sub-graph of the original circuit $G_v=(V,E_v)$ that  is comprised of the entire set of nodes $V$ and all voltage source links.   If  $G_v=(V,E_v)$ is a connected graph, any single node constitutes a fundamental node basis.  In this case, the LCL of the circuit resulting from connecting the external two-port circuit will be equal to $1$ because the voltage drop between any two nodes in the original circuit will be unaffected.

If $G_v=(V,E_v)$ is disconnected, we then can find all its connected components (with some components possibly consisting of single nodes). Then a fundamental node basis may be formed from the set of connected components of the given graph, by arbitrarily selecting one node from each connected component.  Since there exists at least one path comprised purely by voltage source links between every node pair in a connected component, we can easily get the voltage difference between the fundamental node and other nodes in that connected component by computing the algebraic sum of voltage sources on their  connecting paths. Thus once the fundamental node voltages are known, all node voltages of the circuit can be determined easily without solving the Kirchhoff's equations.
\begin{proposition}
\emph{If a DC circuit is comprised purely of voltage sources, resistors, capacitors and/or inductors, then the steady state LCL resulting from ADDING a resistance link must be $\geq$ 1 no matter what topological structure the original circuit has.}
\label{prop:jb:resistance}
\end{proposition}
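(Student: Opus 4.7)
The plan is to reduce to a purely resistive network and then read off the change in dissipation from a Th\'evenin equivalent at the two attachment nodes. In DC steady state, inductors behave as short circuits and capacitors as opens, and neither dissipates power; so every capacitor branch may be deleted and every inductor contracted, leaving a network of voltage sources and resistors whose loss functional $\sum_k j_k^{\,2} R_k$ is unchanged. Let $A,B$ be the two nodes to which the new resistance $R$ is attached, let $V_{th}=V_A-V_B$ denote the open-circuit voltage at those terminals in the unperturbed network, and let $R_{th}$ be the passive equivalent resistance obtained by shorting every source. By Th\'evenin's theorem, connecting $R$ produces a branch current $i_R = V_{th}/(R_{th}+R)$.

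The crux of the proof is the sharp identity for the change in total dissipation,
\[
\Delta P \;=\; V_{th}\, i_R \;=\; \frac{V_{th}^{\,2}}{R_{th}+R}\,,
\]
from which $LCL = 1+\Delta P/P_{old}\geq 1$ follows immediately, with equality iff $V_A=V_B$ in the unperturbed state. To establish it, I would write the post-attachment currents by superposition as $j_k^{new} = j_k^{old}+\Delta j_k$, where $\{\Delta j_k\}$ are the currents induced in the passivated network (all voltage sources shorted) by injecting $i_R$ at $A$ and extracting it at $B$; by construction $\sum_k\Delta j_k^{\,2}R_k = i_R^{\,2}R_{th}$. Two applications of Tellegen's theorem on the enlarged topology then do the work: pairing ``old-state'' voltages (extended to the new branch by $V_{th}$) against the perturbation currents yields $\sum_k j_k^{old}\Delta j_k R_k = \Delta P - V_{th}i_R$, while pairing ``perturbation'' voltages (extended to the new branch by $i_R R - V_{th}$) against the same currents yields $\sum_k\Delta j_k^{\,2}R_k + i_R^{\,2}R = V_{th}i_R$. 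Substituting both into the expansion
\[
P_{new} \;=\; P_{old} + 2\sum_k j_k^{old}\Delta j_k R_k + \sum_k\Delta j_k^{\,2}R_k + i_R^{\,2}R
\]
and using the source-energy balance $\Delta P = \sum_{sources}E_k\Delta j_k^{source}$ collapses everything to $\Delta P = V_{th}i_R$, as required.

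The main obstacle is the bookkeeping for Tellegen on the enlarged topology: consistent reference directions must be maintained across the ``old extended'', ``new'', and ``perturbation'' states so that source-current changes, the terminal voltage $V_{th}$, and the resistor voltage drops combine with the right signs; once the conventions are pinned down, the two identities are essentially mechanical. A variational alternative via Dirichlet's principle (the true potentials minimize $\sum_k(V_i-V_j)^2/R_k$ subject to source constraints) is tempting, but plugging the old potentials into the new functional only gives the upper bound $P_{new}\leq P_{old}+V_{th}^{\,2}/R$, not the lower bound $P_{new}\geq P_{old}$ we actually need, so Tellegen appears to be the right tool here.
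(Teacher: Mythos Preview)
Your Th\'evenin--Tellegen route is correct and in fact yields more than the paper proves: the sharp identity $\Delta P = V_{th}\,i_R = V_{th}^{\,2}/(R_{th}+R)\ge 0$. The two Tellegen pairings you describe do close up exactly as you say; carrying them through gives
\[
2\sum_k j_k^{old}\Delta j_k R_k = 2(\Delta P - V_{th}i_R),\qquad \sum_k \Delta j_k^{\,2}R_k + i_R^{\,2}R = V_{th}i_R,
\]
and substituting into the expansion of $P_{new}$ collapses to $\Delta P = V_{th}i_R$, so the bookkeeping obstacle you flag is real but surmountable.

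The paper, however, takes a genuinely different path---and it is precisely the variational one you dismiss in your final paragraph. The paper introduces a \emph{fundamental node basis} (one representative node per connected component of the voltage-source subgraph), writes the dissipation on the \emph{original} resistors as an explicit convex quadratic in those basis potentials, and observes that the first-order stationarity conditions are exactly Kirchhoff's current law at each component. Thus the pre-attachment potentials globally minimize the \emph{old} loss functional; attaching the new branch shifts the potentials away from that minimizer, so the old-resistor dissipation can only increase, and the new resistor adds a further nonnegative term. Your objection to Dirichlet's principle evaluated the \emph{new} functional at the old potentials (which indeed only gives an upper bound); the paper evaluates the \emph{old} functional at the new potentials, which gives the required lower bound directly.

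What each buys: your argument is shorter, avoids the node-basis machinery, and delivers a quantitative $\Delta P$. The paper's convexity argument is less sharp but generalizes immediately---as the authors remark---to attaching an arbitrary two-port (or even $n$-port) voltage-controlled subnetwork, since nothing in the minimality step uses that the added element is a single resistor.
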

\begin{proof} We start with the simplification of the circuit. If a DC circuit comprised purely of voltage sources, resistors, capacitors and/or inductors is in its steady state, then we can remove all loops with one or more capacitors (since capacitors act identically as open circuits in DC steady state) and replace all inductors with short circuits (since inductors act identically as short circuits in DC steady state) while keeping the total loss of the circuit unchanged.

We then can find all nodes where two or more circuit elements meet (only resistors or voltage sources are left after the above simplification), and mark them as $V=\{1,...,N\}$, where $N\geq2$ is the total number of nodes. Then each link between any pair of adjacent nodes \{$i,j$\} $(i,j=1,...,N,$ and $i\neq j)$ will only have either one voltage source or one resistor.

We next create the voltage source sub-graph of the original circuit $G_v=(V,E_v)$ which is comprised of the whole set of nodes $V=\{1,...,N\}$ and all voltage source links. The case that the voltage source sub-graph $G_v=(V,E_v)$ is a connected graph has been seen to be trivial and thus is ignored here. Suppose $G_v=(V,E_v)$ is disconnected, we then can find all its connected components and assume that they are $G_1=(V_1,E_1),...,G_M=(V_M,E_M)$ ($M$ is the total number of connected components). A fundamental node basis  then can be formed by arbitrarily selecting one node from each connected component, and we denote this basis by $V_F=\{v_1,...,v_M\}$ ($1 \leq v_1,...,v_M \leq N$). We denote the voltage at node $v_i$ by $e_{v_i}$ ($i=1,...,M$), and we choose $v_M$ as the reference ground node, i.e. $e_{v_M}=0$. 

It is easy to prove that the pair of endpoints of any resistor link in the original circuit  belong to either one connected component or two connected components defined above. For those resistor links whose pair of endpoints belong to one connected component, their power loss  will be unchanged no matter what kind of new link is introduced as the voltage drops between the pairs of endpoints are fixed. Thus we can use a scalar constant, say $P_1$, to denote the total loss of such resistor links.\\
Next we compute the loss of the resistor links whose pair of endpoints belong to two connected components. We denote the number of resistor links between  the pair of connected components $\{G_i,G_j\}$ $(i,j=1,...,M, and \ i\neq j$) by $L_{i,j}$, the current flowing away from the $i$-th connected component towards the $j$-th connected component on the $k$-th resistor link between $\{G_i,G_j\}$ $(k=1,...,L_{i,j}$) by $I_{i,j,k}$ ($I_{i,j,k}$=$-I_{j,i,k}$), and the resistor on the $k$-th link between $\{G_i,G_j\}$ by $R_{i,j,k}$ ($R_{i,j,k}$=$R_{j,i,k}>0$). Then the total loss of such resistor links is given by
\begin{equation}
P_2 =  \sum_{i=1}^{M-1}\sum_{j=i+1}^{M}\sum_{k=1}^{L_{i,j}}I_{i,j,k}^2R_{i,j,k}
\end{equation}\\
where $I_{i,j,k}$ can be expressed as
\begin{equation}
I_{i,j,k}=\frac{e_{v_i}+e_{P_{v_{i,k}}}-e_{v_j}-e_{P_{v_{j,k}}}}{R_{i,j,k}}.
\end{equation}
with $e_{P_{v_{i,k}}}$ denoting the  algebraic sum of voltage sources on the path connecting the fundamental node $v_i$ and the node to which the resistor $R_{i,j,k}$ is attached in the connected component $G_i$.  \ ($e_{P_{v_{i,k}}}=0$ if one endpoint of the resistor $R_{i,j,k}$ is directly connected to the fundamental node $v_i$.),  Similarly, $e_{P_{v_{j,k}}}$  is the sum of voltages along the path connecting the other endpoint of the resistor to the fundamental node $v_j$.  ($e_{P_{v_{j,k}}}=0$ if the other endpoint of the resistor $R_{i,j,k}$ is directly connected to the fundamental node $v_j$.)

A potential function whose physical meaning is the total loss of all resistors in the original circuit can be created based on the variables in the fundamental node set.

\noindent This is given explicitly by
\begin{equation}
\begin{split}
P & =  P_1+P_2\\
&=P_1+\sum_{i=1}^{M-1}\sum_{j=i+1}^{M}\sum_{k=1}^{L_{i,j}}\frac{(e_{v_i}+e_{P_{v_{i,k}}}-e_{v_j}-e_{P_{v_{j,k}}})^2}{R_{i,j,k}}
\end{split}
\label{eq:jb:loss}
\end{equation}
where $\{e_{v_1},...,e_{v_{M}}\}$  are the node voltages.  We assume $e_{v_M}=0$, and the value of other node voltages before and after adding a new link are $(\bar{e}_{v_1},...,\bar{e}_{v_{M-1}})$ and $(e_{v_1}^{'},e_{v_2}^{'},...,e_{v_{M-1}}^{'})$, respectively. We shall show that the potential function reaches its minimum loss level before adding a new link, i.e. $P(\bar{e}_{v_1},...,\bar{e}_{v_{M-1}})$ is always $\leq$ $P(e_{v_1}^{'},e_{v_2}^{'},...,e_{v_{M-1}}^{'})$.

We first show that the potential function is a strict convex function. Suppose that 
\begin{equation}
P_{i,j,k}=\frac{(e_{v_i}+e_{P_{v_{i,k}}}-e_{v_j}-e_{P_{v_{j,k}}})^2}{R_{i,j,k}} 
\end{equation}
 $P_{i,j,k}$ is obviously a convex function of the node voltages.  
 Since we assume $e_{v_M}=0$ and suppose
\begin{equation}
P_{i,M,k}=\frac{(e_{v_i}+e_{P_{v_{i,k}}}-e_{P_{v_{M,k}}})^2}{R_{i,M,k}}  \quad (i=1,...,M-1)
\end{equation}
then $P_{i,M,k}$ is also
strictly convex provided $R_{i,M,k}$is finite. Since
\begin{equation}
P=  P_1 + \sum_{i=1}^{M-1}\sum_{j=i+1}^{M}\sum_{k=1}^{L_{i,j}}P_{i,j,k}
\end{equation}
$P$ is also a strictly convex function, taking its global minimum where all its partial derivatives are zero.

We  show that all partial derivatives of potential function are zero when   $(e_{v_1},e_{v_2},...,e_{v_{M-1}}) = (\bar{e}_{v_1},...,\bar{e}_{v_{M-1}})$. 
Since 
\begin{equation}
\begin{split}
\frac{\partial P}{\partial e_{v_i}}=  & 2\times(\sum_{j=1}^{i-1}\sum_{k=1}^{L_{i,j}}\frac{e_{v_i}+e_{P_{v_{i,k}}}-e_{v_j}-e_{P_{v_{j,k}}}}{R_{i,j,k}}+ \\
& \sum_{j=i+1}^{M}\sum_{k=1}^{L_{i,j}}\frac{e_{v_i}+e_{P_{v_{i,k}}}-e_{v_j}-e_{P_{v_{j,k}}}}{R_{i,j,k}}), 
\end{split}\label{kir}
\end{equation} 
the partial derivative of $P$ in the direction $e_{v_i}$ is exactly double the algebraic sum of all currents flowing on the original resistors that meet at the $i$-th connected component.  Because the node voltages $(e_{v_1},e_{v_2},...,e_{v_{M-1}})=(\bar{e}_{v_1},...,\bar{e}_{v_{M-1}})$ must satisfy Kirchhoff's Current Law, the partial derivatives in Equ.~\ref{kir} must be zero for $i=1,\dots,M-1$.  The potential function thus reaches its global minimum loss level under normal operating conditions. We consider what happens if a new link (two-port circuit) is attached to any pair of nodes,  If both endpoints of the newly introduced link are added to the same connected component of $G_v(V,E_v)$, defined above, then the total loss of the original circuit will remain the same as $(\bar{e}_{v_1},...,\bar{e}_{v_{M-1}})=(e_{v_1}^{'},e_{v_2}^{'},...,e_{v_{M-1}}^{'})$. However, if the two endpoints of the new link are added to two different connected components, say $G_i$ and $G_j$, and if we denote the current flowing on the new link from $G_i$ to $G_j$ by $I_{i,j}^{new}$, then we will have 
\begin{equation}
\begin{split}
\frac{\partial P}{\partial e_{v_i}^{'}}&=2\times(\sum_{j=1}^{i-1}\sum_{k=1}^{L_{i,j}}\frac{e_{v_i}^{'}+e_{P_{v_{i,k}}}-e_{v_j}^{'}-e_{P_{v_{j,k}}}}{R_{i,j,k}}+ \\
&\ \ \ \sum_{j=i+1}^{M}\sum_{k=1}^{L_{i,j}}\frac{e_{v_i}^{'}+e_{P_{v_{i,k}}}-e_{v_j}^{'}-e_{P_{v_{j,k}}}}{R_{i,j,k}})\\&=2\times I_{i,j}^{new}.
\end{split}
\end{equation}  
After adding the new link, the algebraic sum of all currents flowing on the original resistors that meet at the $i$-th connected component is determined by the current $I_{i,j}^{new}$ which is not necessarily zero as there may be current import and export to the newly added link from the original system. Therefore the original loss equilibrium is perturbed away from its minimum level, and thus the system loss of the original system will have increased.  This proves the proposition.
\end{proof}

\begin{remark}
The multi-node connected components of $G_v(V,E_v)$ may have arbitrarily complex topologies--including tree and loop components.  If a fundamental node basis that differs from the one chosen to define the loss in (\ref{eq:jb:loss}) is chosen, the form of the loss function (\ref{eq:jb:loss}) will differ accordingly.  The critical point determined by setting the partial derivatives in (\ref{kir}) equal to zero will minimize the new expression for loss.  It follows from the Kirchhoff circuit laws, that the minimizing values in both representations are the same---as we would expect.
\end{remark}

\begin{remark}
A more general version of Proposition \ref{prop:jb:resistance} can be established.  Indeed, the proof as given applies to the connection any two-port voltage controlled circuit to an existing voltage controlled circuit in steady state.  Although the proof becomes more involved, a similar result holds for the addition of an $n$-port external circuit.
\end{remark}

\section{Conclusions}

It has long been recognized that contingencies like the loss of a major power line can pose significant threats to the secure operation of the power grid. The focus of this paper has been on the way that seemingly small changes can have large effects. We have presented examples of simple circuits and networks that display the kinds of sensitivity to small changes in operating parameters that will need to be better understood as smart microgrids become an increasingly important part of power distribution networks. Secure operation of these microgrids will require the real-time coordinated control of increasing numbers of small-scale generation resources and consumer-provided demand response assets while respecting the safe operating ranges of all lines and equipment. Taking inspiration from concepts of congestion in traffic networks, we have studied what we call Kirchhoff-Braess phenomena---the apparent worsening of congestion due to the addition of a lightly loaded line connecting points in the network. We have defined a power network analogue of the price of anarchy that we call the {\em Loss Cost of the Link} (LCL). This is simply the ratio of losses after and before the asset (e.g.\ a line) has been added. It has been shown that this ratio is frequently greater than one, and this has been studied in detail for simple small networks and in considerable generality for voltage controlled networks in Section VI. Space does not permit treatment of current-controlled circuits, but a corresponding theory characterizes situations in which network losses increase and other situations in which there are decreases in network losses when a new circuit link is added. An important goal of future research is an understanding of general classes of optimal power flows in which the cost sensitivity of network congestion as treated in Section V is extended to realistic transmission and distribution networks. Models that capture degradation of voltage, frequency and other important physical parameters are being developed. 


\section{Acknowledgments}
The authors thank colleagues Michael Caramanis and Justin Foster for bringing Braess paradox in distribution networks to their attention. 


\end{document}